\newtheorem{prop}{Proposition}
\begin{document}

\title{A tutorial on generalizing the default Bayesian $t$-test via posterior sampling and encompassing priors}

\author{Thomas J. Faulkenberry\footnote{Corresponding author: Assistant Professor, Department of Psychological Sciences, Tarleton State University, Stephenville, Texas, USA. E-mail: faulkenberry@tarleton.edu}\address[a]{Department of Psychological Sciences, Tarleton State University}}

\begin{abstract}
  With the advent of so-called ``default'' Bayesian hypothesis tests, scientists in applied fields have gained access to a powerful and principled method for testing hypotheses. However, such default tests usually come with a compromise, requiring the analyst to accept a one-size-fits-all approach to hypothesis testing. Further, such tests may not have the flexibility to test problems the scientist really cares about.  In this tutorial, I demonstrate a flexible approach to generalizing one specific default test (the JZS $t$-test; Rouder et al., 2009) that is becoming increasingly popular in the social and behavioral sciences. The approach uses two results, the Savage-Dickey density ratio (Dickey \& Lientz, 1980) and the technique of encompassing priors (Klugkist et al., 2005) in combination with MCMC sampling via an easy-to-use probabilistic modeling package for R called Greta. Through a comprehensive mathematical description of the techniques as well as illustrative examples, the reader is presented with a general, flexible workflow that can be extended to solve problems relevant to his or her own work.

  \vspace{5mm}

Note: this paper is in press at {\it Communications for Statistical Applications and Methods}.
\end{abstract}

\keywords{Bayes factors, Bayesian inference, hypothesis testing, MCMC sampling, JZS t-test, Savage-Dickey density ratio, encompassing priors}


\section{Introduction}
The $t$-test is one of the simplest, yet most enduring, examples of a hypothesis test that the social and behavioral scientist uses in his or her daily work. In the typical framework of null hypothesis significance testing (NHST), the $t$-test works by first assuming a {\sl null hypothesis}, and then calculating a $t$-score, which indexes the likelihood of obtaining some sample of observed data under the null hypothesis. If this probability is small, the scientist rejects the null in favor of some alternative hypothesis.  

Consider the following scenario that is often used when assessing the effect of some treatment. Let $x_{i1}$ and $x_{i2}$ denote measurements for the $i^{th}$ participant in two different conditions (e.g., a pretest and posttest). Consider the difference $d_i=x_{i2}-x_{i1}$. A typical consideration is whether these differences are different from 0; answering this question in the affirmative would then imply that the treatment had some nonzero effect. To answer this question, one can apply the standard one-sample $t$-test, which works by first assuming
\[
  d_i \sim \text{Normal}(\mu,\sigma^2),
\]
and then defining two competing hypotheses: a null hypothesis $\mathcal{H}_0:\mu=0$ and an alternative hypothesis $\mathcal{H}_1:\mu\neq 0$. We then test $\mathcal{H}_0$ by computing
\[
  t=\frac{\overline{d}}{s/\sqrt{n}},
\]
where $\overline{d}$ is the mean of the differences $d_i$ across all participants $i=1,\dots,n$, $s$ is the sample standard deviation of the difference scores $d_i$, and $n$ is the sample size. Under the null $\mathcal{H}_0$, the distribution of $t$ is well known as {\sl Student's $t$ distribution}, with density function
\[
  f(x) = \frac{\Gamma\left(\frac{\nu + 1}{2}\right)}{\sqrt{\nu \pi} \Gamma\left(\frac{\nu}{2}\right)} \left(1+\frac{x^2}{\nu}\right)^{-\frac{\nu+1}{2}},
\]
where $\nu$ represents degrees of freedom, and $x\in(-\infty,\infty)$. The cumulative distribution function $F(x)=\int_{-\infty}^x f(u)du$ can then be used to index the probability of observing data at least as extreme as that which we observed under the null hypothesis $\mathcal{H}_0$. Specifically, we compute $p(|x|>t) = 1-F(t)$, a quantity commonly known as a $p$-value.  If this probability is small (say, less than 5\%), then one may decide to reject $\mathcal{H}_0$ and conclude that $\mu \neq 0$, thus implying that our treatment had some nonzero effect.

This idea is well known to practicing social and behavioral scientists. However, there are some properties of this procedure that are suboptimal for robust inference. For one, the procedure is asymmetric \citep{rouder2009}. Suppose that one calculates a $p$-value above some commonly-used threshold like 5\%. What decision does the researcher make? Surely the logical opposite of ``reject $\mathcal{H}_0$'' is ``accept $\mathcal{H}_0$''. However, this decision rule is inconsistent. The reason for this follows from examining the distribution of $p$-values that result from increasing sample sizes. When the null is false (i.e., $\mu_1\neq \mu_2$), the value of $t$ increases as sample sizes increase.  Thus, the probability of rejecting $\mathcal{H}_0$ increases accordingly. However, if the null is true, $p$-values are uniformly distributed between 0 and 1, regardless of sample size. So, whereas a false null hypothesis can always be rejected if sample size is large enough, a true null hypothesis is always susceptible to being incorrectly rejected. Such inconsistency leads to asymmetry in the testing procedure -- increasing sample size can increase evidence against a false null hypothesis, but there is no corresponding way to increase evidence for a true null hypothesis.

Another criticism of the traditional hypothesis testing procedure is that researchers often misinterpret the results of such tests. \citet{hoekstra2014} asked 562 researchers and students from the field of psychology to assess the validity of six different statements involving incorrect interpretations of confidence intervals (e.g., ``The probability that the true mean is greater than 0 is at least 95\%''). Although each of these statements was false, both students and researchers on average believed at least 3 of the statements were true. Furthermore, researchers did no better than students with respect to these misunderstandings. This finding echos results by \citet{oakes1986}, who performed a similar study using statements about $p$-values; see also \citet{gigerenzer2004}.

In light of these criticisms, the social and behavioral sciences have seen an increase in recommendations to find alternatives to the orthodox use of null hypothesis significance testing \citep{wagenmakers2011}. One such alternative is to use Bayesian inference, and in particular Bayes factors \citep{kass1995,raftery1995,masson2011}. Bayesian inference is based on calculating the {\sl posterior probability} of a hypothesis $\mathcal{H}$ after observing data $\bm{y}$. This calculation proceeds by Bayes' theorem, which states
\begin{equation}\label{eq:bayes}
  p(\mathcal{H}\mid \bm{y}) = \frac{p(\bm{y}\mid \mathcal{H}) \cdot p(\mathcal{H})}{p(\bm{y})}.
\end{equation}
One way to think of Equation \ref{eq:bayes} is as follows: prior to observing data, one assigns a prior belief $p(\mathcal{H})$ to a hypothesis $\mathcal{H}$. Once the data $\bm{y}$ have been observed, one updates this prior belief to a posterior belief $p(\mathcal{H}\mid \bm{y})$ by multiplying the prior $p(\mathcal{H})$ by the likelihood $p(\bm{y}\mid \mathcal{H})$. This product is then rescaled to meet the requirements for being probability distribution (i.e., total probability = 1) by dividing by $p(\bm{y})$, the marginal probability of the observed data averaged across all possible hypotheses $\mathcal{H}$.

While this computation is fundamentally quite basic, one immediate consequence is how it can be used for comparing two hypotheses. Suppose as above that we have two competing hypotheses: a null hypothesis $\mathcal{H}_0$ and an alternative hypothesis $\mathcal{H}_1$. We can directly compare our posterior beliefs in these two hypotheses by computing their ratio $p(\mathcal{H}_0\mid \bm{y}) / p(\mathcal{H}_1\mid \bm{y})$, which we call the {\sl posterior odds} for $\mathcal{H}_0$ over $\mathcal{H}_1$.  Using Bayes' theorem (Equation \ref{eq:bayes}), we can readily see
\begin{equation}\label{eq:odds}
  \underbrace{\frac{p(\mathcal{H}_0\mid \bm{y})}{p(\mathcal{H}_1\mid \bm{y})}}_{\text{posterior odds}} = \underbrace{\frac{p(\bm{y}\mid \mathcal{H}_0)}{p(\bm{y}\mid \mathcal{H}_1)}}_{\text{Bayes factor}} \cdot \underbrace{\frac{p(\mathcal{H}_0)}{p(\mathcal{H}_1)}}_{\text{prior odds}}.
\end{equation}
As with Bayes' theorem, Equation \ref{eq:odds} can also be interpreted in terms of an ``updating'' metaphor.  Specifically, the posterior odds ratio is equal to the prior odds ratio multiplied by an updating factor.  This updating factor is the ratio of the marginal likelihoods $p(\bm{y}\mid \mathcal{H}_0)$ and $p(\bm{y}\mid \mathcal{H}_1)$, and is called the Bayes factor \citep{jeffreys1961,kass1995}.  The Bayes factor is the weight of evidence provided by data $\bm{y}$.  For example, suppose that one assigned the prior odds of $\mathcal{H}_0$ to $\mathcal{H}_1$ to be equal to 4-to-1; that is, we believe that, {\sl a priori}, $\mathcal{H}_0$ is 4 times more likely to be true than $\mathcal{H}_1$.  Then, suppose that after observing data, we compute a Bayes factor was computed to be 5.  Now, the posterior odds (the odds of $\mathcal{H}_0$ over $\mathcal{H}_1$ {\it after} observing data) is 20-to-1 in favor of $\mathcal{H}_0$ over $\mathcal{H}_1$.  

There are two immediate advantages to using the Bayes factor for inference. First, the Bayes factor is a ratio, and thus, is subject to a natural interpretation. Simply put, larger is better - the bigger the Bayes factor, the bigger the weight of evidence provided by the observed data. Second, since there was no specific assumption about the order in which we addressed $\mathcal{H}_0$ and $\mathcal{H}_1$, we could have just as easily measured the weight of evidence in favor of $\mathcal{H}_1$ over $\mathcal{H}_0$. In fact, once we have a Bayes factor in favor of one hypothesis, a simple reciprocal will give us the Bayes factor in favor of the the other hypothesis. In our example above, the Bayes factor for $\mathcal{H}_1$ over $\mathcal{H}_0$ would have been 1/5, implying that the data would actually {\sl decrease} our relative belief in $\mathcal{H}_1$ over $\mathcal{H}_0$. Because we can compute Bayes factors from either direction, we must be careful to define our notation carefully. In this paper, I will adopt the common convention to define $B_{01}$ as the Bayes factor for $\mathcal{H}_0$ over $\mathcal{H}_1$. Similarly, $B_{10}$ would represent the Bayes factor for $\mathcal{H}_1$ over $\mathcal{H}_0$.  Note that, by our discussion above, $B_{01} = 1 / B_{10}$.

Though the previous discussion certainly speaks positively about the benefits of using the Bayes factor as a tool for inference, there are some important considerations that the researcher must address before implementing it as a tool for inference. First, as we'll see in the discussion below, the Bayes factor requires the analyst to specify prior distributions on all parameters in the underlying model. Thus, a given Bayes factor reflects a specific choice of prior. Second, the computation of these Bayes factors is usually nontrivial. For example, computing the either the numerator or denominator of Equation \ref{eq:odds} requires explicitly defining the hypothesis $\mathcal{H}_i$ as a model consisting of vectors $\bm{\xi}$ in some parameter space $\Xi$ and integrating the likelihood weighted by a prior distribution on these parameters; that is,
\[
  p(\bm{y}\mid \mathcal{H}_i) = \int_{\bm{\xi}\in \Xi} f(\bm{y}\mid \bm{\xi},\mathcal{H}_i)p(\bm{\xi} \mid \mathcal{H}_i)d\bm{\xi},
\]
where $f$ is the likelihood function, and $p$ denotes the prior distribution on parameters $\bm{\xi}\in \Xi$ under model $\mathcal{H}_i$. However, the last decade has seen the development of many tools intended to simplify the calculation of Bayes factors for the common models used by applied researchers, including online calculators, standalone software packages such as JASP \citep{JASP2018}, and a wide range of packages for the statistical computing environment R \citep{R}, including the package BayesFactor \citep{bayesfactor}. Because these solutions are designed to work across a variety of contexts, one must necessarily assume some defaults with respect to the models that underly these calculators. Many have argued that these defaults represent reasonable assumptions about the types of problems with which many applied researchers are concerned. However, recent advances in statistical computing have made it easier for the practicing researcher to build his or her own custom models for a given situation and compute Bayes factors to compare these models.

In this paper, I will provide a tutorial with particular focus on extending one type of Bayesian model comparison known as the Jeffreys-Zellner-Siow $t$-test \citep{rouder2009} (henceforth abbreviated as JZS $t$-test).  Specifically, I will describe a generalization that provides an adaptable, computationally efficient method for computing Bayes factors in a variety of single-sample and independent-samples designs. The organization of the paper is as follows. First, I will describe the mathematical underpinnings of the JZS $t$-test. Then, I will present two results which allow us to generalize the JZS $t$-test to a broader class of model comparisons: (1) the {\it Savage-Dickey density ratio} \citep{dickey1970,wagenmakers2010,wetzels2009}, which is used for comparing models in which one is a sharp hypothesis (e.g, a point null hypothesis) nested within another unconstrained model; and (2) the {\it encompassing prior technique} \citep{klugkist2005}, which is used for comparing nested models with ordinal constraints. Finally, I will demonstrate (with examples) how to use these techniques along with {\it posterior sampling} \citep{gelfand1990} to compute Bayes factors for model comparisons involving both point-null hypotheses (i.e., testing whether an effect is exactly 0), directional hypotheses (i.e., testing whether an effect is postive compared to whether it is negative), and interval-null hypotheses (i.e., testing whether an effect is approximately 0).

\section{The JZS $t$-test}

The JZS $t$-test \citep{rouder2009} was developed as a default Bayesian version of the orthodox $t$-test described above. We will denote by $\bm{y}$ a vector of observed data, and assume as above that $\bm{y}$ is normally distributed with mean $\mu$ and variance $\sigma^2$. We can then explicitly define two competing hypotheses: a null hypothesis $\mathcal{H}_0$ and an alternative hypothesis $\mathcal{H}_1$.  Both hypotheses can be parameterized with two parameters, $\mu$ and $\sigma^2$. Under the alternative hypothesis $\mathcal{H}_1$, we allow $\mu$ and $\sigma$ to freely vary. That is, $\mathcal{H}_1$ is an unconstrained model; $\mu$ could be positive, negative, or zero. For the null hypothesis $\mathcal{H}_0$, which states that the mean of data $\bm{y}$ is equal to 0, we can simply constrain $\mu$ to be 0. Thus, we say that $\mathcal{H}_0$ is {\it nested} within $\mathcal{H}_1$.  Under these models, we can compute the Bayes factor $B_{01} = p(\bm{y}\mid \mathcal{H}_0)/p(\bm{y}\mid \mathcal{H}_1)$, where
\[
  p(\bm{y}\mid \mathcal{H}_0) = \int_0^{\infty} f(\bm{y}\mid \mu=0, \sigma^2,\mathcal{H}_0)p(\sigma^2,\mathcal{H}_0)d\sigma^2
\]
and
\[
  p(\bm{y}\mid \mathcal{H}_1) = \int_{-\infty}^{\infty}\int_0^{\infty}f(\bm{y}\mid \mu,\sigma^2,\mathcal{H}_1)p(\mu,\sigma^2,\mathcal{H}_1)d\sigma^2 d\mu.
\]
These computations require placing priors on $\sigma^2$ under the null model $\mathcal{H}_0$ and both $\mu$ and $\sigma^2$ under the alternative model $\mathcal{H}_1$. Following \citet{jeffreys1961} and \citet{zellner1980}, Rouder et al. reparameterized the problem by placing a Cauchy prior on effect size $\delta=\mu/\sigma$. That is, under $\mathcal{H}_1$, $\delta \sim \text{Cauchy}(0,r)$, where $r$ represents the scale of expected effect sizes, and under $\mathcal{H}_0$, $\delta=0$. Rouder et al. placed a Jeffrey's prior on $\sigma^2$; specifically, $p(\sigma^2) \propto 1/\sigma^2$. With these default prior specifications, Rouder et al. (2009) showed that the Bayes factor can be computed as
%
\begin{equation}\label{eq:jzs}
  B_{01} = \frac{\left(1+\frac{t^2}{\nu}\right)^{-\frac{\nu+1}{2}}}{\int_0^{\infty}(1+Ngr^2)^{-\frac{1}{2}}\left(1+\frac{t^2}{(1+Ngr^2)\nu}\right)^{-\frac{\nu+1}{2}}(2\pi)^{-\frac{1}{2}}g^{-\frac{3}{2}}\exp\left(-\frac{1}{2g}\right)dg},
\end{equation}
%
where $t$ is the orthodox $t$ statistic, $r$ is the scale on the effect size prior, $N$ is the number of observations in $\bm{y}$, and $\nu=N-1$ denotes the degrees of freedom.

Though computationally convenient, this JZS Bayes factor formula has a few disadvantages. First, it reflects a very specific choice of prior specification. Though using a Cauchy prior on effect size may be a reasonable choice for many researchers, especially in the behavioral sciences \citep{rouder2009}, others may argue for a different prior. For example, \citet{killeen2007} used meta-analytic data from \citet{richard2003} to argued that effect sizes in social psychology are typically normally distributed with variance equal to 0.3. Certainly, one advantage of a Bayesian approach is that the prior on effect size should reflect the analyst's prior belief on what effect sizes should be expected. For some fields of study, these effect sizes may be expected to be small (i.e., social psychology), whereas in other fields, these effect sizes may be expected to be larger. Also, note that the reason for using the Cauchy prior (instead of a normal prior) in the JZS Bayes factor is one of computational convenience; it simply makes the computation work out. If the analyst wants to use a different prior, the formula for the JZS Bayes factor in Equation \ref{eq:jzs} would have to be recomputed, a task which would only be accessible to those researchers with the appropriate mathematical background.

Another disadvantage of the JZS Bayes factor is that it forces the analyst into a very specific hypothesis test; that is, $\mathcal{H}_0:\delta=0$ versus $\mathcal{H}_1:\delta\neq 0$. One may be interested instead in more flexible testing situations -- for example, testing a directional hypothesis $\mathcal{H}_0:\delta>0$ versus $\mathcal{H}_1:\delta\leq 0$, or an interval hypothesis such as $\mathcal{H}_0:-\varepsilon<\delta <\varepsilon$ versus an alternative model $\mathcal{H}_1:|\delta|>\varepsilon$ \citep{morey2011}. These tests would each require a major readjustment to the derivation of the JZS Bayes factor. Instead, I propose that we approach problems like these using a fundamentally different set of tools, which I will now describe.

\section{Generalizing the JZS $t$-test}

In this section, I describe a method for extending the default JZS $t$-test of Rouder and colleagues to use a wider class of priors on effect size. This generalization thus allows the analyst more freedom to specify a prior that may better reflect his or her a priori expectation about what effect sizes are typically encountered in a given field. The original description of this method is due to \citet{wetzels2009} and \citet{wagenmakers2010}, though the software implementation and specific extensions I will describe are novel.

The core method relies on a result known as the Savage-Dickey density ratio \citep{dickey1970}, which states that the Bayes factor for a pair of models in which one of the models is a one-point restriction of the other (i.e., $\mathcal{H}_0:\delta=0$ versus $\mathcal{H}_1:\delta \neq 0$) is simply the ratio of the ordinates of the one point of interest (i.e., $\delta=0$) in the posterior and prior densities, respectively. The technical formulation and proof will be presented momentarily. For now, however, one should appreciate that this can be a great simplification over other methods of computing Bayes factors presented above, since there is no need for integration. Assuming one can estimate the prior and posterior densities via some sampling method (e.g., Markov chain Monte Carlo, or MCMC, sampling), then the computation of this ratio of densities is straightforward.

The Savage-Dickey density ratio can stated rigorously as the following proposition:

\begin{prop} \label{savageDickey}(Savage-Dickey Density Ratio) Consider two competing models on data $\bm{y}$ containing parameters $\delta$ and $\varphi$, namely $\mathcal{H}_0:\delta=\delta_0,\varphi$ and $\mathcal{H}_1:\delta,\varphi$. In this context, we say that $\delta$ is a parameter of interest, $\varphi$ is a nuisance parameter (i.e., common to all models), and $\mathcal{H}_0$ is a sharp point hypothesis nested within $\mathcal{H}_1$. Suppose further that the prior for the nuisance parameter $\varphi$ in $\mathcal{H}_0$ is equal to the prior for $\varphi$ in $\mathcal{H}_1$ after conditioning on the restriction -- that is, $p(\varphi\mid \mathcal{H}_0) = p(\varphi\mid \delta=\delta_0,\mathcal{H}_1)$. Then
  \[
    B_{01} = \frac{p(\delta=\delta_0\mid \bm{y},\mathcal{H}_1)}{p(\delta=\delta_0\mid \mathcal{H}_1)}.
    \]
\end{prop}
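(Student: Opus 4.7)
The plan is to start from the definition of the Bayes factor, $B_{01} = p(\bm{y}\mid \mathcal{H}_0)/p(\bm{y}\mid \mathcal{H}_1)$, and rewrite the numerator until only quantities defined under the unconstrained model $\mathcal{H}_1$ remain. The key observation is that the marginal likelihood under the sharp null can be identified with a conditional marginal likelihood under $\mathcal{H}_1$, after which a single application of Bayes' theorem in the $\delta$ coordinate will produce the desired ratio.

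First I would expand the numerator by integrating out the nuisance parameter,
\[
  p(\bm{y}\mid \mathcal{H}_0) = \int p(\bm{y}\mid \varphi,\mathcal{H}_0)\, p(\varphi\mid \mathcal{H}_0)\, d\varphi.
\]
Because $\mathcal{H}_0$ is literally the restriction of $\mathcal{H}_1$ to $\delta=\delta_0$, the likelihood satisfies $p(\bm{y}\mid \varphi,\mathcal{H}_0) = p(\bm{y}\mid \delta=\delta_0,\varphi,\mathcal{H}_1)$. Next I would invoke the compatibility hypothesis on the priors, namely $p(\varphi\mid \mathcal{H}_0) = p(\varphi\mid \delta=\delta_0,\mathcal{H}_1)$, to conclude that
\[
  p(\bm{y}\mid \mathcal{H}_0) = \int p(\bm{y}\mid \delta=\delta_0,\varphi,\mathcal{H}_1)\, p(\varphi\mid \delta=\delta_0,\mathcal{H}_1)\, d\varphi = p(\bm{y}\mid \delta=\delta_0,\mathcal{H}_1).
\]

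Then I would apply Bayes' theorem to this last quantity, viewing $\delta$ as the conditioning variable inside $\mathcal{H}_1$:
\[
  p(\bm{y}\mid \delta=\delta_0,\mathcal{H}_1) = \frac{p(\delta=\delta_0\mid \bm{y},\mathcal{H}_1)\, p(\bm{y}\mid \mathcal{H}_1)}{p(\delta=\delta_0\mid \mathcal{H}_1)}.
\]
Dividing by $p(\bm{y}\mid \mathcal{H}_1)$ cancels the marginal likelihood under $\mathcal{H}_1$ and yields the claimed identity.

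The routine parts are the two applications of the definition of a conditional density; the only substantive step is the second, where I replace $p(\bm{y}\mid \mathcal{H}_0)$ by $p(\bm{y}\mid \delta=\delta_0,\mathcal{H}_1)$, and this is precisely where the prior-compatibility assumption $p(\varphi\mid \mathcal{H}_0)=p(\varphi\mid \delta=\delta_0,\mathcal{H}_1)$ is used. I would flag this as the main conceptual obstacle: without that assumption the two marginal likelihoods differ by a factor of $\int p(\bm{y}\mid \delta_0,\varphi,\mathcal{H}_1)[p(\varphi\mid \mathcal{H}_0)-p(\varphi\mid \delta_0,\mathcal{H}_1)]\,d\varphi$, and the clean density-ratio formula collapses. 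A minor technical caveat worth mentioning is that the conditional prior $p(\varphi\mid \delta=\delta_0,\mathcal{H}_1)$ must be well defined even though $\{\delta=\delta_0\}$ is a measure-zero event in $\mathcal{H}_1$; in practice this is handled by assuming the joint prior $p(\delta,\varphi\mid \mathcal{H}_1)$ has a positive, continuous density at $\delta=\delta_0$, which is also what ensures the denominator $p(\delta=\delta_0\mid \mathcal{H}_1)$ on the right-hand side is a positive ordinate rather than a probability mass.
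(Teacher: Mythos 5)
Your proof is correct and follows essentially the same route as the paper: unpack $p(\bm{y}\mid\mathcal{H}_0)$ over the nuisance parameter $\varphi$, use the prior-compatibility assumption to identify it with $p(\bm{y}\mid\delta=\delta_0,\mathcal{H}_1)$, then apply Bayes' theorem in $\delta$ and cancel $p(\bm{y}\mid\mathcal{H}_1)$. Your added remark about needing the joint prior to have a positive, continuous density at $\delta=\delta_0$ is a sensible technical caveat that the paper leaves implicit.
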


\begin{proof}
By definition, the Bayes factor is the ratio of marginal likelihoods over $\mathcal{H}_0$ and $\mathcal{H}_1$, respectively. That is,
\begin{equation}\label{bf}
  B_{01}=\frac{p(\bm{y}\mid \mathcal{H}_0)}{p(\bm{y}\mid \mathcal{H}_1)}.
\end{equation}
The key idea in the proof is that we can use a ``change of variables'' technique to express $B_{01}$ entirely in terms of $\mathcal{H}_1$. This proceeds by first unpacking the marginal likelihood for $\mathcal{H}_0$ over the nuisance parameter $\varphi$ and then using the fact that $\mathcal{H}_0$ is a sharp hypothesis nested within $\mathcal{H}_1$ to rewrite everything in terms of $\mathcal{H}_1$. Specifically,
\begin{align*}
 p(\bm{y}\mid \mathcal{H}_0) &= \int p(\bm{y}\mid \varphi,\mathcal{H}_0)p(\varphi\mid \mathcal{H}_0)d\varphi\\
                              &= \int p(\bm{y}\mid \varphi,\delta=\delta_0,\mathcal{H}_1)p(\varphi\mid \delta=\delta_0,\mathcal{H}_1)d\varphi\\
  &=\ p(\bm{y}\mid \delta=\delta_0,\mathcal{H}_1).\\
\end{align*}
By Bayes' Theorem, we can rewrite this last line as
\[
  p(\bm{y}\mid \delta=\delta_0,\mathcal{H}_1) = \frac{p(\delta=\delta_0\mid \bm{y},\mathcal{H}_1)p(\bm{y}\mid \mathcal{H}_1)}{p(\delta=\delta_0\mid \mathcal{H}_1)}.
\]
Thus we have
\begin{align*}
  B_{01} = \frac{p(\bm{y}\mid \mathcal{H}_0)}{p(\bm{y}\mid \mathcal{H}_1)} &= p(\bm{y}\mid \mathcal{H}_0) \cdot \frac{1}{p(\bm{y}\mid \mathcal{H}_1)}\\[2mm]
  &= p(\bm{y}\mid \delta=\delta_0,\mathcal{H}_1) \cdot \frac{1}{p(\bm{y}\mid \mathcal{H}_1)}\\[2mm]
                                                                           &= \frac{p(\delta=\delta_0\mid \bm{y},\mathcal{H}_1)p(\bm{y}\mid \mathcal{H}_1)}{p(\delta=\delta_0\mid \mathcal{H}_1)} \cdot \frac{1}{p(\bm{y}\mid \mathcal{H}_1)}\\[2mm]
  &=\frac{p(\delta=\delta_0\mid \bm{y},\mathcal{H}_1)}{p(\delta=\delta_0\mid \mathcal{H}_1)}.
\end{align*}
\end{proof}

The beauty of Proposition \ref{savageDickey} is that it allows one to calculate the Bayes factor for a point null hypothesis (i.e, $\mathcal{H}_0:\delta=0$) by simply computing two densities: (1) the density of $\delta=0$ in the posterior, and (2) the density of $\delta=0$ in the prior. Then, the Bayes factor results by taking the ratio of these posterior and prior densities, respectively. Given this result, this changes the problem of computing Bayes factors from one of integration (e.g, the JZS Bayes factor) to one of estimating prior and posterior densities.

\subsection{Computing the Savage-Dickey density ratio}

The Savage-Dickey density ratio is an elegant solution to the problem of computing Bayes factors in situations involving a point null hypothesis $\mathcal{H}_0$. All that is required is that one can compute samples from the posterior of an effect size parameter under a specified alternative model $\mathcal{H}_1$. I think casting the problem in the context is preferable, not only for its flexibility, but especially given the broad class of computer methods now available for sampling posteriors in Bayesian models, including BUGS \citep{bugs}, JAGS \citep{jags}, and Stan \citep{stan}. I will now focus on one recent addition to this collection -- Greta \citep{greta}.

Greta is an R package designed for sampling from Bayesian models. It provides a reasonably simple language for modeling that is implemented directly within R, eliminating the need for writing models in another language (e.g., JAGS, Stan) and then having to call these external files from within R. Further, Greta uses the computational power of Google TensorFlow \citep{tensorflow}, so it provides fast convergence based on Hamiltonian Monte Carlo sampling \citep{neal2011}, it scales well to very large datasets, and it can even be configured to run on GPUs, providing the ability for massive parallel computation. Moreover, it is a free download from the Comprehensive R Archive Network (CRAN) \footnote{https://CRAN.R-project.org/package=greta}, and as such can be installed directly from within R by typing the command \verb|install.packages(``greta'')| at the R console. Note that fitting models with Greta will require the user to have a working installation of Python packages for TensorFlow (version 1.10.0 or higher) and tensorflow-probability (version 0.3.0 or higher). Once Greta is installed, the startup message will provide the user with system-specific instructions on how to install these two packages. While this step can be tricky, most errors can be addressed by following the recommendations on the Greta help page \footnote{\href{https://greta-stats.org/articles/get_started.html}{https://greta-stats.org/articles/get\_started.html}} and the TensorFlow help page \footnote{https://tensorflow.rstudio.com/tensorflow/articles/installation.html}.

To illustrate how Greta works, we will first look at a model inspired by that which was initially described by \citet{wetzels2009} as an alternative to the JZS $t$-test. As is common in these types of models, the model is depicted as a graphical model \citep{gilks1994} in Figure \ref{model1}.  In such graphical models, we use the various nodes to represent all variables of interest. Dependencies between these variables are indicated with graph structure. Deterministic nodes are denoted as rhombuses (i.e., rotated squares), whereas stochastic nodes are represented by unshaded circles. Finally, we denote observed variables by shaded nodes.

\begin{figure}
  \centering
  \includegraphics[width=0.7\textwidth]{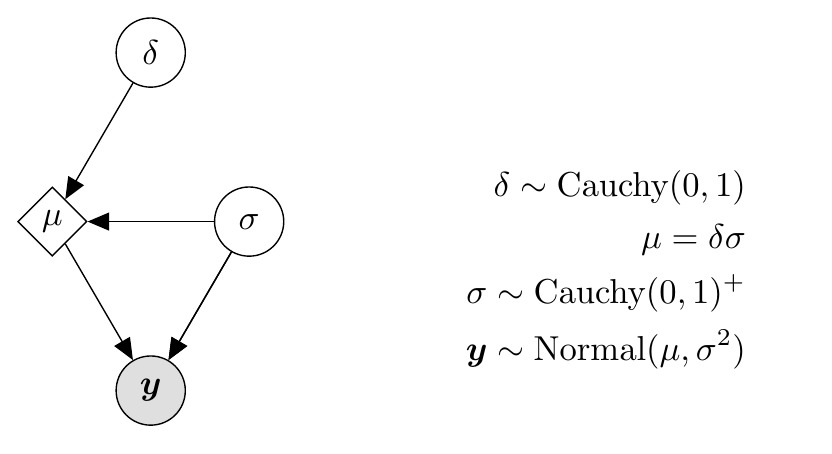}
  \caption{A graphical model for a posterior sampling Bayesian $t$-test. }
 \label{model1}
 \end{figure}

In this model, our data $\bm{y}$ is assumed to be drawn from a normal distribution with mean $\mu$ and variance $\sigma^2$. As in the discussion of the JZS $t$-test above, we consider effect size $\delta=\mu/\sigma$ as our main parameter of interest. For a fully Bayesian specification, we must place priors on $\sigma$ and $\delta$. For this first example, we follow \citet{rouder2009} and \citet{wetzels2009} and adopt their recommendations of placing a half-Cauchy prior on $\sigma$ (that is, one of the symmetric halves of the Cauchy(0,1) distribution that is defined for positive numbers only). Critically, we assume that effect size $\delta$ is distributed as Cauchy(0,1); the scale value of 1 indicates that, a priori, we believe that 50\% of our effect sizes would lie between -1 and 1.  Of course, as we'll see below, if this doesn't reflect the analyst's prior belief about $\delta$, this prior can be easily changed. This choice of model allows us to define two competing hypotheses:
\begin{align*}
  \mathcal{H}_0:&\hspace{2mm}\delta=0\\
  \mathcal{H}_1:&\hspace{2mm}\delta \sim \text{Cauchy}(0,1)
\end{align*}
As $\mathcal{H}_0$ is a sharp hypothesis nested within $\mathcal{H}_1$, we can apply Proposition \ref{savageDickey} (the Savage-Dickey density ratio) and compute
\[
  B_{01} = \frac{p(\delta=0\mid \bm{y},\mathcal{H}_1)}{p(\delta=0\mid \mathcal{H}_1)}.
\]
To do this, we'll need to draw samples from the posterior distribution of $\delta$ under $\mathcal{H}_1$ and estimate the height of $\delta=0$ in an estimated density function for this posterior. All of this can be done in R, as I will now illustrate.

To begin, let us consider a simple example, the type of which can be found in most elementary statistics textbooks. This example comes from \citet{hoel1984}. Suppose that 10 patients take part in an experiment on a new drug that is supposed to increase sleep in the patients. Table \ref{tab:ex1} shows the hours of sleep gained by each patient (negative values indicate lost sleep). Assuming that the sample data are normally distributed with mean $\mu$ and variance $\sigma^2$, we can test the hypothesis $\mathcal{H}_0:\mu=0$ against $\mathcal{H}_1:\mu \neq 0$.
%
\begin{table}
  \setlength{\tabcolsep}{10pt}
  \centering
  \caption{Example data for a single-sample $t$ test}
  \label{tab:ex1}
  \begin{tabular}{|l|cccccccccc|}
    \hline
    {\it Patient} & 1 & 2 & 3 & 4 & 5 & 6 & 7 & 8 & 9 & 10\\
    {\it Hours gained} & 0.7 & -1.1 & -0.2 & 1.2 & 0.1 & 3.4 & 3.7 & 0.8 & 1.8 & 2.0\\
    \hline
  \end{tabular}
\end{table}
%
The R code necessary to perform a Bayesian $t$-test on this data is displayed in Listing \ref{code1}. The first step will be to load the Greta library (see line 2). After this, we need to assign our sample data to a vector $\bm{y}$ and then convert these to $z$-scores (see lines 5-6). The next step is to define the prior distributions on $\delta$ and $\sigma$. The Greta syntax allows this to be done in a quite straightforward manner (see lines 9-10). Further, any deterministic operations should then be defined, as we do in line 13. Then, we can define our likelihood for the $z$-scores. The wording of the syntax has a nice advantage here, as it describes exactly what we are assuming about our scores; namely, that they are normally distributed with mean $\mu$ and variance $\sigma^2$ (see line 16). The last step in setting up the model is to {\it define} the model; that is, we collect all of the variables of interest in our analysis. We have three: $\mu$, $\sigma$, and $\delta$, which we collect together in line 19. Now we are finally ready to sample from the posterior distributions of the variables in our model. We will focus our interest on \verb|delta|, but Greta will automatically sample all posteriors for us. This step, displayed in line 22, will take a little while, depending on computing resources.

\begin{lstlisting}[float,caption=Building and sampling from the single-sample $t$-test model,label=code1]
# load libraries
library(greta)

# data from Hoel (1984) 
y = c(0.7,-1.1,-0.2,1.2,0.1,3.4,3.7,0.8,1.8,2.0)
z = y/sd(y)

# priors
delta = cauchy(0,1)
sigma = cauchy(0,1,truncation = c(0,Inf))

# operations
mu = delta*sigma

# likelihood
distribution(z) = normal(mu,sigma)

# define model
m = model(mu, sigma, delta)

# draw samples
draws = mcmc(m, n_samples=5000)
\end{lstlisting}

Once the sampling is complete, there are two ways to inspect the samples before proceeding to our inference. The first is to type \verb|summary(draws)|; this will show us various descriptive statistics of the samples, including mean, standard deviation, standard error, and quantiles. In our example, there will be three lines of output; one for each of $\mu$, $\sigma$, and $\delta$. Another way to look at the samples is to inspect the path of the samples over time as they explore the posterior distributions.  This is done by using the \verb|mcmc_trace| command from the bayesplot package \footnote{https://CRAN.R-project.org/package=bayesplot} \citep{bayesplot}. If the samples converged appropriately, one should see the characteristic ``hairy caterpillar'' plot, indicating that the chains mixed well and truly randomly explored the posteriors.

We will now look at how to compute Bayes factors necessary to compare the models $\mathcal{H}_0$ and $\mathcal{H}_1$. We will do this by computing the Savage-Dickey density ratio. Recall from Proposition \ref{savageDickey} that in order to compute $B_{01}$, we simply need to compute the ordinate of $\delta=0$ in the densities of the prior and posterior, and then take their ratio. We already know the density function for the prior, and it is implemented in R as the \verb|dcauchy| function. However, since we are using samples to approximate the posterior, we need a way to estimate its density function from the samples. One such method is to use a logspline density estimator \citep{kooperberg1992,stone1997}, which is implemented by the function \verb|logspline| from the R package polspline \footnote{https://CRAN.R-project.org/package=polspline} \citep{polspline}.

Listing \ref{code2} shows the R code necessary to both (1) plot the ordinates from the prior and posterior densities for $\delta=0$ under $\mathcal{H}_1$, and (2) compute $BF_{01}$ as the ratio of these ordinates. The first step is to extract the relevant samples from the posterior for $\delta$ from the object \verb|draws| (see line 4). Note that if the analyst is interested in other parameters (e.g., $\mu$ or $\sigma$, the \verb|[,3]| part of line 4 can be adjusted appropriately. We can then perform the logspline estimate of the posterior density for $\delta$, as shown on line 5.

From here, there are two paths worth exploring. First, I typically will produce a plot showing the components of the Savage-Dickey density ratio. Such a plot will consist of (1) a plot of the prior density for $\delta$ under $\mathcal{H}_1$ (the Cauchy(0,1) distribution); (2) a plot of the posterior density (from the logspline estimate); and (3) the ordinates of $\delta=0$ in both of these densities. Lines 7-21 will produce such a graph, which can be seen in Figure \ref{fig:ex1}.

Next, we can compute the Savage-Dickey density ratio using the code on lines 24-30. Lines 24-25 compute the specific ordinates required. \verb|posterior| represents the ordinate of $\delta=0$ in the posterior distribution.  Since the posterior was estimated from the \verb|logspline| function, we call this estimate for our calculation using \verb|dlogspline| along with the object name of our estimate from line 5 (i.e, \verb|fitPost|). \verb|prior| represents the ordinate of $\delta=0$ under $\mathcal{H}_1$; computing this uses a simple call to the \verb|dcauchy| function. Finally, we can divide \verb|posterior| by \verb|prior| to compute $B_{01}$, which is denoted in line 26 by \verb|BF01|.  Lines 29 and 30 then simply display both $B_{01}$ (the Bayes factor in favor of $\mathcal{H}_0$ over $\mathcal{H}_1$) and $B_{10}$ (the Bayes factor in favor of $\mathcal{H}_1$ over $\mathcal{H}_0$). From this computation, one can see that we have moderate support for $\mathcal{H}_1$, as $B_{01} \approx 0.4$. The intuition for this can be had by looking at how density at $\delta=0$ changes from prior to posterior. In Figure \ref{fig:ex1}, the posterior density of $\delta=0$ decreases relative to the prior density, indicating that our belief in a null effect decreases after observing data $\bm{x}$. Equivalently, we can use the reciprocal to compute $B_{10} \approx 2.6$, indicating that our data are 2.6 times more likely under the alternative model $\mathcal{H}_1$ compared to the null model $\mathcal{H}_0$, giving us positive evidence in favor of a nonzero effect $\delta$.

\begin{lstlisting}[float,caption=Plotting and computing the Savage-Dickey density ratio,label=code2]
Library(polspline)
  
# extract draws from MCMC object and fit a density estimate
posteriorDelta = draws[[1]][,3]
fitPost = logspline(posteriorDelta)

x = seq(-2,2,length.out=1000)

# plot density of prior and posterior together
plot(x, dlogspline(x, fitPost), type="l", main="", xlab="delta",
     ylab="density", xlim=c(-2,2), lwd=2, lty=1)

# add prior
lines(x, dcauchy(x,0,1), lwd=2, lty=3)

# add points at 0 for both prior and posterior
points(0, dlogspline(0, fitPost), pch=19)
points(0, dcauchy(0,0,1), pch=19)

legend(-2,0.8, legend=c("Prior density","Posterior density"),
     lty=c(3,1), lwd=c(2,2), bty="n")

# compute SD density ratio
posterior <- dlogspline(0, fitPost) 
prior     <- dcauchy(0,0,1)                
BF01      <- posterior/prior

# display both Bayes factors
BF01
1/BF01    
\end{lstlisting}

\begin{figure}
  \centering
  \includegraphics[width=\linewidth]{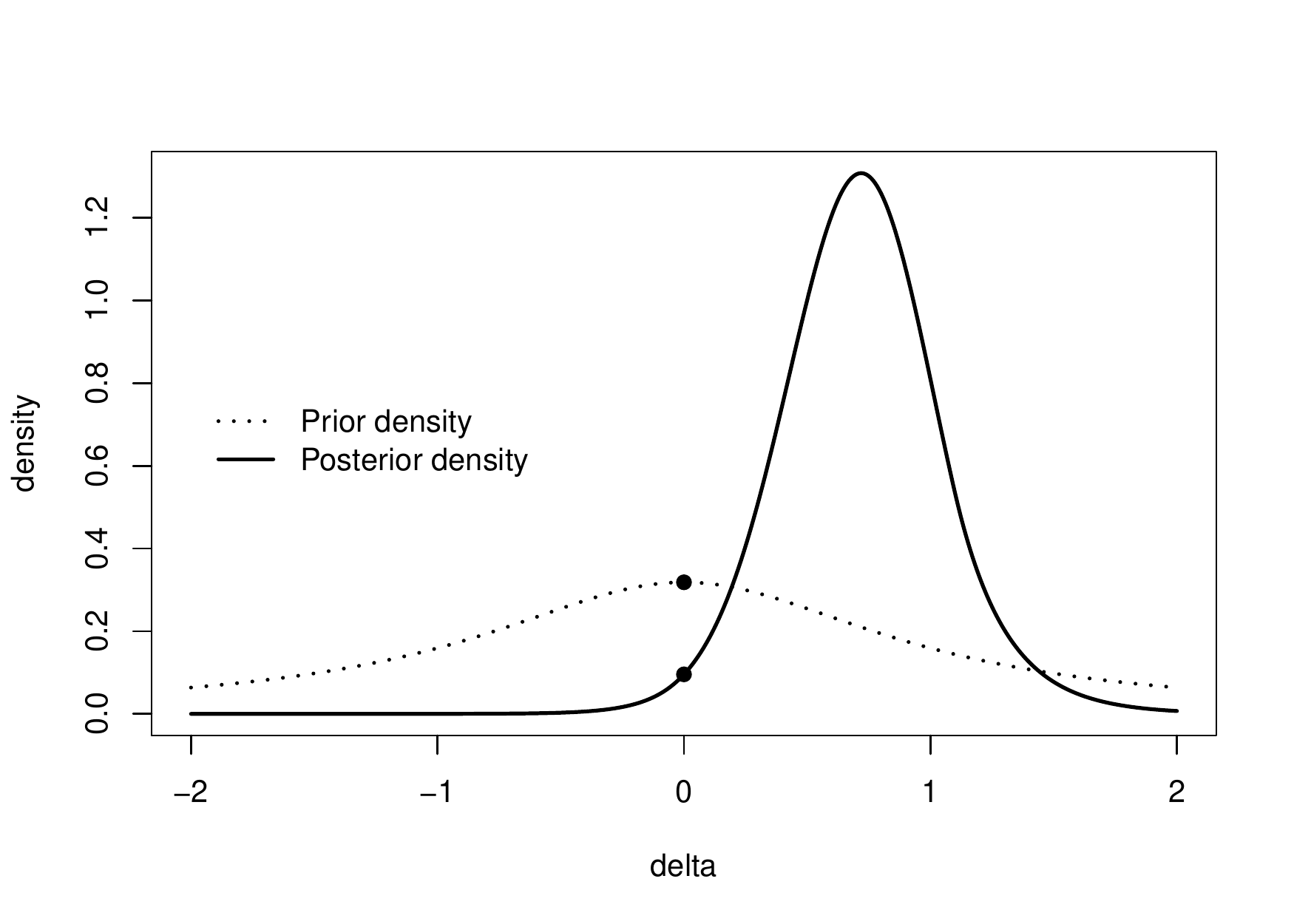}
    \caption{A plot showing the necessary components for computing the Savage-Dickey density ratio. Included are the density of the prior for $\delta$ under $\mathcal{H}_1$ (i.e., a Cauchy(0,1) distribution, depicted as a dashed line) as well as the logspline density estimate for the posterior of $\delta$ (depicted as a solid line). The Bayes factor $B_{01}$ can be computed as the ratio of the ordinates of $\delta=0$ under the posterior and prior, respectively.}
     \label{fig:ex1}
  \end{figure}

  Now, suppose the analyst had a different a prior belief about the effect sizes he or she would expect in this context. For illustration, let us suppose that $\delta$ is normally distributed with mean $\mu=0$ and variance $\sigma^2=0.3$, as recommended by \citet{killeen2007}. In this case, all of the above code could be run again with some minor changes. First, one would need to change line 9 in Listing 1 to \verb|delta = normal(0,sqrt(0.3))|. Also, any line in Listing 2 that uses \verb|dcauchy(x,0,1)| would need to be replaced by \verb|dnorm(x,0,sqrt(0.3))|. After this minor change, the resulting Bayes factor would be $B_{01}\approx 0.25$, or equivalently, $B_{10}\approx 4.0$. Note that this Bayes factor is a bit larger than the previous one in which the Cauchy prior was used for $\delta$. The reason folllows from the fact that the Cauchy prior is more dispersed relative to a normal prior, and thus with a normal prior, we have a relatively greater prior mass on smaller effects. Particularly, the ordinate of $\delta=0$ is larger in the normal prior compared to the Cauchy prior. The result is that the ratio between the ordinates of $\delta=0$ in the prior and posterior becomes larger for the normal prior, thus giving us a larger Bayes factor.

\subsection{Using the Savage-Dickey density ratio for a two-sample design}

The methods described above will readily scale up to problems involving two independent samples. All that is required is that the underlying model is adjusted accordingly. I will illustrate this with another example from \citet{hoel1984}.

Consider a sample of 20 rats, each of which receives their main source of protein from either raw peanuts or roasted peanuts. To compare weight gains as a function of protein source, a researcher randomly assigns 10 rats to receive only raw peanuts and 10 rats to receive only roasted peanuts. The resulting weight gains (in grams) are displayed in Table \ref{tab:rats}.

\begin{table}
  \centering
    \setlength{\tabcolsep}{10pt}
  \caption{Example data for a two-sample $t$-test.}
  \label{tab:rats}
  \begin{tabular}{|l|cccccccccc|}
    \hline
    {\it Raw} & 62 & 60 & 56 & 63 & 56 & 63 & 59 & 56 & 44 & 61\\
    {\it Roasted} & 57 & 56 & 49 & 61 & 55 & 61 & 57 & 54 & 62 & 58\\
    \hline 
  \end{tabular}
\end{table}

First, we must consider the underlying model. As with the single-sample example, we can represent this model as a directed acyclic graph, which is shown in Figure \ref{fig:model2}. In this model \citep[inspired by][]{wetzels2009}, both independent samples $\bm{x}$ and $\bm{y}$ are assumed to be drawn from two normal distributions with shared variance $\sigma^2$. The mean of the parent distribution of $\bm{x}$ is $\mu+\alpha/2$ and the mean for the parent distribution of $\bm{y}$ is $\mu-\alpha/2$. With this parameterization, $\alpha$ represents the ``effect'' or difference between the two populations. As with the single-sample example, we then scale this effect to a standardized effect $\delta=\alpha/\sigma$. Also, standard Cauchy priors are placed on $\delta$, $\mu$, and a truncated Cauchy prior is placed on $\sigma$.

\begin{figure}
  \centering
  \includegraphics[width=0.7\textwidth]{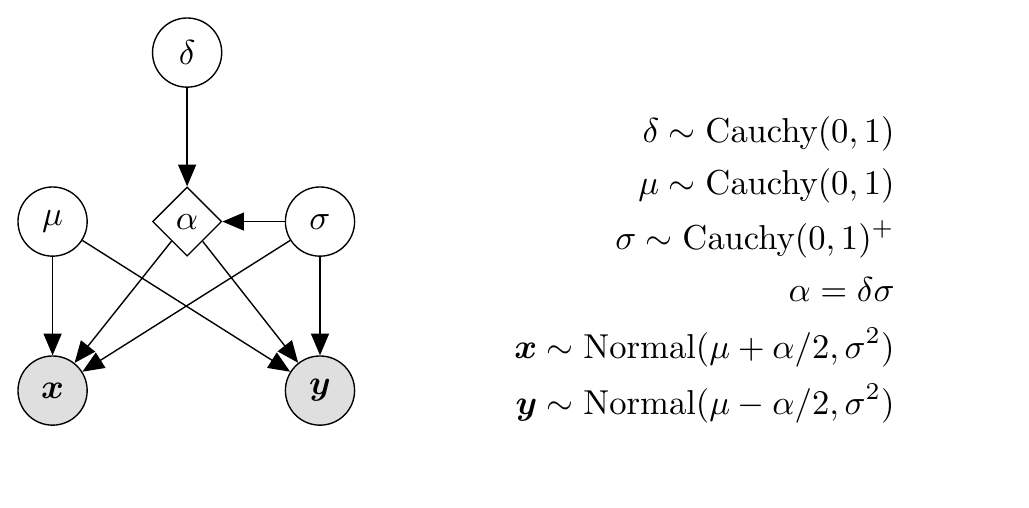}
  \caption{A graphical model for the posterior sampling independent samples $t$-test.}
  \label{fig:model2}
\end{figure}

For concreteness, let us denote the sample of weight gains from the raw peanut diet as $\bm{x}$ and the weight gains from roasted peanuts as $\bm{y}$. Given the model in Figure \ref{fig:model2}, our goal is to sample from the posterior distribution of $\delta$. The R code necessary to perform this sampling is displayed in Listing \ref{code4}. The procedure is similar to the single-sample model in Listing \ref{code1}, but there are some notable modifications that are particular to the independent samples model. First, we need to rescale the raw data vectors \verb|x| and \verb|y| to $z$-scores. Since we are assuming shared variance, it suffices to base both $z$-score transformations on only one of \verb|x| and \verb|y|. In lines 9-10, I have chosen to base the $z$-scores on \verb|x|, but note that similar results will be obtained if instead the researcher chooses to base all $z$-scores on \verb|y|. Lines 13-15 define the priors that we assigned to the parameters in our model. Lines 18-24 reflect our assumption that data $\bm{x}$ and $\bm{y}$ are randomly drawn from two normal distributions centered at $\mu+\alpha/2$ and $\mu-\alpha/2$, respectively. In lines 27 and 30 we tell Greta to pull 5000 samples from the posterior distribution of $\delta$; note that for simplicity, I have only included \verb|delta| in the model, though one could add any other variable in the model if desired. These posterior samples are then extracted into the vector \verb|posteriorDelta| in line 33. 

\begin{lstlisting}[float,caption=Building and sampling from the independent-samples $t$-test model,label=code4]
# load libraries
library(greta)

# data from Hoel (1984)
x = c(62, 60, 56, 63, 56, 63, 59, 56, 44, 61)
y = c(57, 56, 49, 61, 55, 61, 57, 54, 62, 58)

# rescale so that x has mean=0 and sd=1
zx = (x-mean(x))/sd(x)
zy = (y-mean(x))/sd(x)

# priors
delta = cauchy(0,1)
mu = cauchy(0,1)
sigma = cauchy(0,1,truncation = c(0,Inf))

# operations
alpha = delta*sigma
mux = mu + alpha/2
muy = mu - alpha/2

# likelihood
distribution(zx) = normal(mux,sigma)
distribution(zy) = normal(muy,sigma)

# define model
m = model(delta)

# MCMC sample
draws = mcmc(m, n_samples = 5000)

# extract draws from MCMC object
posteriorDelta = draws[[1]][,1]
\end{lstlisting}

After completing the code in Listing \ref{code4}, the Savage-Dickey density ratio can be plotted and computed as we did earlier in Listing \ref{code2}. Figure \ref{fig:ex2} shows this ratio graphically; indeed, note that the posterior density of $\delta=0$ increases relative to the prior density. This ratio is computed to be $B_{01}=2.92$, indicating that the data are 2.92 times more likely under $\mathcal{H}_0$ than under $\mathcal{H}_1$. Thus, we can conclude positive support for a null effect of peanut type on rats' weight gain.

\begin{figure}
  \centering
  \includegraphics[width=\linewidth]{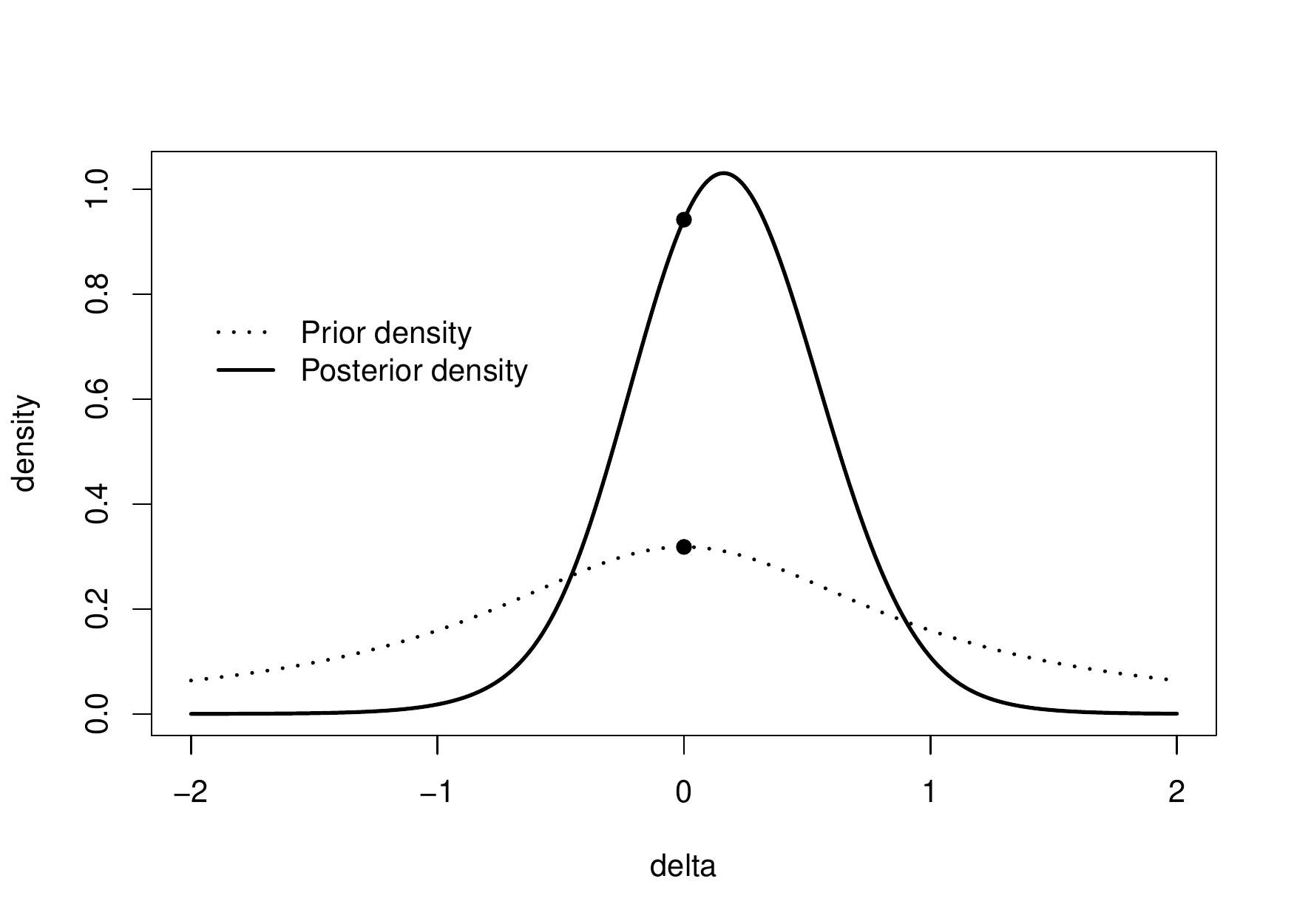}
    \caption{A plot showing the necessary components for computing the Savage-Dickey density ratio in the independent-samples model. Included are the density of the prior for $\delta$ under $\mathcal{H}_1$ (i.e., a Cauchy(0,1) distribution, depicted as a dashed line) as well as the logspline density estimate for the posterior of $\delta$ (depicted as a solid line). The Bayes factor $B_{01}$ can be computed as the ratio of the ordinates of $\delta=0$ under the posterior and prior, respectively.}
     \label{fig:ex2}
  \end{figure}

\section{Extension: using encompassing priors for inequality constraints}

In the previous section, I described an extension of the JZS $t$-test that uses MCMC sampling to approximate the posterior distribution of effect size $\delta$. This method works for sharp hypotheses (i.e., a point null, such as $\mathcal{H}_0:\delta=0$) by employing the Savage-Dickey density ratio, which reduces the calculation of the Bayes factor $B_{01}$ into a simple ratio based on the ordinates of the point $\delta=0$ in both the prior and posterior distributions for $\delta$.

Consider again the sleep example above. What if instead the researcher wanted to whether the new drug {\it increased} sleep in patients? This would require the ability to test a directional hypotheses $\mathcal{H}_1:\delta > 0$ against $\mathcal{H}_0:\delta\leq 0$. At first glance, this seems like quite a different problem, as the Savage-Dickey density ratio does not directly apply to models with inequality constraints. However, there is a method due originally to \citet{klugkist2005} that fits with this type of problem. In their approach, Klugkist et al. cast such problems as one of testing models with inequality constraints nested within an encompassing model. In this context, both hypotheses $\mathcal{H}_0$ and $\mathcal{H}_1$ are considered as specific inequality constraints nested within an encompassing model $\mathcal{H}_e:\delta$, where $\delta$ is unconstrained (i.e., $\delta \in \mathbb{R}$). The Klugkist et al. approach (which I will hereafter call the {\it encompassing approach}) amounts to using MCMC samples to calculate
\[
  B_{0e} = \frac{p(\bm{y}\mid \mathcal{H}_0)}{p(\bm{y}\mid \mathcal{H}_e)}
\]
and
\[
  B_{1e} = \frac{p(\bm{y}\mid \mathcal{H}_1)}{p(\bm{y}\mid \mathcal{H}_e)}.
\]
Once these two Bayes factors are computed, one can use transitivity of Bayes factors to compute
\[
  B_{01} = B_{0e}\cdot B_{e1} = B_{0e} \cdot \frac{1}{B_{1e}}.
\]

The mechanics of the encompassing approach can be summarized in the following proposition:

\begin{prop}\label{encompassing} Consider two models $\mathcal{H}_1$ and $\mathcal{H}_e$, where $\mathcal{H}_1$ is nested within an encompassing model $\mathcal{H}_e$ via an inequality constraint on some parameter $\delta$, and $\delta$ is unconstrained under $\mathcal{H}_e$. Then
\[
  B_{1e} = \frac{c}{d} = \frac{1/d}{1/c}
\]
where $1/d$ and $1/c$ represent the proportions of the posterior and prior of the encompassing model, respectively, that are in agreement with the inequality constraint imposed by the nested model $\mathcal{H}_1$.
\end{prop}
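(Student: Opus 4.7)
The plan is to reduce both sides of the claimed identity to quantities defined purely under the encompassing model $\mathcal{H}_e$ and then cancel a common factor. The key starting observation is that because $\mathcal{H}_1$ arises from $\mathcal{H}_e$ solely by imposing an inequality restriction on $\delta$, the prior under $\mathcal{H}_1$ must be the truncation of the encompassing prior to the admissible region $A = \{\delta : \delta \text{ satisfies the constraint}\}$, renormalized to integrate to one. Writing $P_e(A) = \int_A p(\delta \mid \mathcal{H}_e)\, d\delta$ and setting $c = 1/P_e(A)$, this reads $p(\delta \mid \mathcal{H}_1) = c \cdot p(\delta \mid \mathcal{H}_e)\, \mathbf{1}_A(\delta)$, while the likelihoods $p(\bm{y} \mid \delta)$ agree on $A$ since $\delta$ is the only parameter being newly constrained.

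The core calculation then proceeds in three quick steps. First, I would expand the marginal likelihood under $\mathcal{H}_1$ and substitute the truncated prior:
\[
  p(\bm{y} \mid \mathcal{H}_1) = \int_A p(\bm{y} \mid \delta)\, p(\delta \mid \mathcal{H}_1)\, d\delta = c \int_A p(\bm{y} \mid \delta)\, p(\delta \mid \mathcal{H}_e)\, d\delta.
\]
Second, I would apply Bayes' theorem inside the encompassing model to rewrite the integrand as $p(\delta \mid \bm{y}, \mathcal{H}_e)\, p(\bm{y} \mid \mathcal{H}_e)$ and pull the data-marginal out of the integral, obtaining
\[
  p(\bm{y} \mid \mathcal{H}_1) = c \cdot p(\bm{y} \mid \mathcal{H}_e) \cdot \int_A p(\delta \mid \bm{y}, \mathcal{H}_e)\, d\delta.
\]
Third, I would identify the remaining integral as the posterior mass of $A$ under $\mathcal{H}_e$, which is exactly $1/d$ by the stated definition, and divide both sides by $p(\bm{y} \mid \mathcal{H}_e)$ to conclude $B_{1e} = c/d = (1/d)/(1/c)$.

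The main obstacle is not algebraic but conceptual: it is justifying the truncation identity $p(\delta \mid \mathcal{H}_1) \propto p(\delta \mid \mathcal{H}_e)\, \mathbf{1}_A(\delta)$. This plays the same foundational role here as the prior-matching condition $p(\varphi \mid \mathcal{H}_0) = p(\varphi \mid \delta=\delta_0, \mathcal{H}_1)$ does in Proposition~\ref{savageDickey}, and once it is accepted as part of what "encompassing prior" means, the rest of the argument is routine manipulation of integrals. The practical payoff, which it is worth emphasizing after the proof, is that the right-hand side $(1/d)/(1/c)$ can be estimated directly from MCMC samples as the ratio of the empirical proportions of prior and posterior draws falling inside $A$, so no integration or density estimation is needed.
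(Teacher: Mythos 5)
Your proof is correct, but it takes a genuinely different route from the one in the paper. The paper's argument is a ``candidate's formula'' style proof: it rearranges Bayes' theorem to write each marginal likelihood as $f(\bm{y}\mid \bm{\xi},\mathcal{H}_t)\,p(\bm{\xi}\mid\mathcal{H}_t)/p(\bm{\xi}\mid\bm{y},\mathcal{H}_t)$ evaluated at a single parameter value $\bm{\xi'}$ common to both models, cancels the likelihood terms, and then invokes truncation identities for \emph{both} the prior ($p(\bm{\xi'}\mid\mathcal{H}_1)=c\,p(\bm{\xi'}\mid\mathcal{H}_e)$) \emph{and} the posterior ($p(\bm{\xi'}\mid\bm{y},\mathcal{H}_1)=d\,p(\bm{\xi'}\mid\bm{y},\mathcal{H}_e)$), the latter asserted ``by similar reasoning.'' You instead integrate the marginal likelihood of $\mathcal{H}_1$ directly over the admissible region $A$, substitute the truncated prior, and apply Bayes' theorem inside the encompassing model to convert the integrand into the encompassing posterior, so that the posterior mass $1/d$ emerges as a computed quantity rather than a second assumed truncation identity. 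This is a real advantage of your version: the fact that the constrained posterior is the renormalized truncation of the encompassing posterior is a \emph{consequence} of the prior truncation plus the shared likelihood, not an independent hypothesis, and your derivation makes that explicit while the paper's leaves it implicit. What the paper's pointwise approach buys in exchange is that it avoids integration entirely and exhibits the likelihood cancellation at a single $\bm{\xi'}$, which connects more directly to the general Klugkist et al.\ framework where the constraint may involve a multidimensional parameter vector $\bm{\xi}$ rather than a single scalar $\delta$. One small caveat on your write-up: you work with $\delta$ alone, whereas the models also carry nuisance parameters such as $\sigma$; your integrals should strictly be over the full parameter vector (or you should note that the nuisance parameters have been marginalized out of $p(\bm{y}\mid\delta)$ under a prior common to $\mathcal{H}_1$ and $\mathcal{H}_e$), but this does not affect the substance of the argument.
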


\begin{proof}
Consider first that for any model $\mathcal{H}_t$ on data $\bm{y}$ with parameter vector $\bm{\xi}$, Bayes' theorem implies
\[
  p(\bm{\xi}\mid \bm{y},\mathcal{H}_t) = \frac{f(\bm{y}\mid \bm{\xi},\mathcal{H}_t)\cdot p(\bm{\xi}\mid \mathcal{H}_t)}{p(\bm{y}\mid \mathcal{H}_t)}.
\]
Thus, we can write the marginal likelihood for $\bm{y}$ under $\mathcal{H}_t$ as
\[
  p(\bm{y}\mid \mathcal{H}_t) = \frac{f(\bm{y}\mid \bm{\xi},\mathcal{H}_t)\cdot p(\bm{\xi}\mid \mathcal{H}_t)}{p(\bm{\xi}\mid \bm{y},\mathcal{H}_t)}.
\]
Taking the ratio of the marginal likelihoods for $\mathcal{H}_1$ and the encompassing model $\mathcal{H}_e$ yields the following Bayes factor:
\[
  B_{1e} = \frac{f(\bm{y}\mid \bm{\xi},\mathcal{H}_1)\cdot p(\bm{\xi}\mid \mathcal{H}_1) / p(\bm{\xi}\mid \bm{y},\mathcal{H}_1)}{f(\bm{y}\mid \bm{\xi},\mathcal{H}_e)\cdot p(\bm{\xi}\mid \mathcal{H}_e) / p(\bm{\xi}\mid \bm{y},\mathcal{H}_e)}.
\]
Now, both the constrained model $\mathcal{H}_1$ and the encompassing model $\mathcal{H}_e$ contain the same parameters $\bm{\xi}$. Choose a specific value of $\bm{\xi}$, say $\bm{\xi'}$, that exists in both models $\mathcal{H}_1$ and $\mathcal{H}_e$ (we can do this because $\mathcal{H}_1$ is nested within $\mathcal{H}_e$. Then, for this parameter value $\bm{\xi'}$, we have $f(\bm{y}\mid \bm{\xi'},\mathcal{H}_1)=f(\bm{y}\mid \bm{\xi'},\mathcal{H}_2)$, so the expression for the Bayes factor reduces to an expression involving only the priors and posteriors for $\bm{\xi'}$ under $\mathcal{H}_1$ and $\mathcal{H}_e$:
\[
  B_{1e} = \frac{p(\bm{\xi'}\mid \mathcal{H}_1) / p(\bm{\xi'}\mid \bm{y},\mathcal{H}_1)}{p(\bm{\xi'}\mid \mathcal{H}_e) / p(\bm{\xi'}\mid \bm{y},\mathcal{H}_e)}.
\]
Because $\mathcal{H}_1$ is nested within $\mathcal{H}_e$ via an inequality constraint, the prior $p(\bm{\xi'}\mid \mathcal{H}_1)$ is simply a truncation of the encompassing prior $p(\bm{\xi'}\mid \mathcal{H}_e)$. Thus, we can express $p(\bm{\xi'}\mid \mathcal{H}_1)$ in terms of the encompassing prior $p(\bm{\xi'}\mid \mathcal{H}_e)$ by multiplying the encompassing prior by an indicator function over $\mathcal{H}_1$ and then normalizing the resulting product.  That is,
\begin{align*}
  p(\bm{\xi'}\mid \mathcal{H}_1) &= \frac{p(\bm{\xi'}\mid \mathcal{H}_e)\cdot I_{\bm{\xi'}\in\mathcal{H}_1}}{\int p(\bm{\xi'}\mid\mathcal{H}_e)\cdot I_{\bm{\xi'}\in \mathcal{H}_1}d\bm{\xi'}}\\
                                 &=\Biggl(\frac{I_{\bm{\xi'}\in \mathcal{H}_1}}{\int p(\bm{\xi'}\mid \mathcal{H}_e)\cdot I_{\bm{\xi'}\in\mathcal{H}_1}d\bm{\xi'}}\Biggr)\cdot p(\bm{\xi'}\mid \mathcal{H}_e),
\end{align*}
where $I_{\bm{\xi'}\in \mathcal{H}_1}$ is an indicator function. For parameters $\bm{\xi'}\in \mathcal{H}_1$, this indicator function is identically equal to 1, so the expression in parentheses reduces to a constant, say $c$, allowing us to write
\[
  p(\bm{\xi'}\mid \mathcal{H}_1) = c\cdot p(\bm{\xi'}\mid \mathcal{H}_e).
\]
By similar reasoning, we can write the posterior as
\[
  p(\bm{\xi'}\mid \bm{y},\mathcal{H}_1) = \Biggl(\frac{I_{\bm{\xi'}\in \mathcal{H}_1}}{\int p(\bm{\xi'}\mid \bm{y},\mathcal{H}_e)I_{\bm{\xi'}\in\mathcal{H}_1}d\bm{\xi'}}\Biggr)\cdot p(\bm{\xi'}\mid \bm{y},\mathcal{H}_e) = d\cdot p(\bm{\xi'}\mid \bm{y},\mathcal{H}_e).
\]
This gives us
\[
  B_{1e} = \frac{c\cdot p(\bm{\xi'}\mid \mathcal{H}_e) / d\cdot p(\bm{\xi'}\mid \bm{y},\mathcal{H}_e)}{p(\bm{\xi'}\mid \mathcal{H}_e) / p(\bm{\xi'}\mid \bm{y},\mathcal{H}_e)} = \frac{c}{d} = \frac{1/d}{1/c}.
\]
Note that by definition, $1/d$ represents the proportion of the posterior distribution for $\bm{\xi}$ under the encompassing model $\mathcal{H}_e$ that agrees with the constraints imposed by $\mathcal{H}_1$.  Similarly, $1/c$ represents the proportion of the prior distribution for $\bm{\xi}$ under the encompassing model $\mathcal{H}_e$ that agrees with the constraints imposed by $\mathcal{H}_1$.
\end{proof}

It might seem a bit odd to represent the fraction $c/d$ in the form $(1/d)/(1/c)$. However, this is again done for a computational advantagem, as we can use MCMC sampling to easily estimate the proportions $1/d$ and $1/c$. Also note that in some sense, the encompassing prior approach of \citet{klugkist2005} is a generalized version of the Savage-Dickey density ratio. Indeed, \citet{wetzels2010} proved that under ``about equality'' constraints (e.g., a constrained model $\mathcal{M}:-\varepsilon<\delta < \varepsilon$ for $\varepsilon>0$), the Bayes factor derived from the encompassing approach tends toward the Bayes factor (for the point null where $\delta=0$) obtained from the Savage-Dickey density ratio as $\varepsilon\rightarrow 0$.

\subsection{Computing Bayes factors with the encompassing approach}

To illustrate the computation of Bayes factors with the encompassing approach, let us consider the problem mentioned immediately above -- suppose we wanted to test whether the drug that we administered to sleep patients actually {\it increased} the patients' sleep. Specifically, we wish to compare $\mathcal{H}_0:\delta \leq 0$ against $\mathcal{H}_1:\delta >0$. We will do this by considering both $\mathcal{H}_0$ and $\mathcal{H}_1$ as models with inequality constraints nested with an encompassing model $\mathcal{H}_e:\delta$, where $\delta$ is unconstrained. Then, we can use transitivity to compute $B_{10} = B_{1e}\cdot B_{e0} = B_{1e}/B_{0e}$.

The R code necessary to perform these computations is in Listing \ref{code3}. As the encompassing model is defined identical to that from Figure \ref{model1}, the code assumes that we have already drawn samples from that model, as we did in Listing \ref{code1}. Note that just like our previous computations with the Savage-Dickey density ratio, using the encompassing approach requires that we sample from the posterior of $\delta$ under the unconstrained, encompassing model $\mathcal{H}_e$. Though the notation is different, this is exactly the same posterior distribution that we sampled from in Listing \ref{code1}.

\begin{lstlisting}[float,caption=Computing a Bayes factor for a directional hypothesis using the encompassing approach,label=code3]
# directional hypothesis using encompassing priors
# H_0: "null" model: delta <= 0
# H_1: "alternative" model: delta > 0
# H_e: "encompassing" model: delta ~ Cauchy (unconstrained)

# H_0 versus H_e
postEvidential = mean(posteriorDelta<=0)
priorEvidential = pcauchy(0,0,1)
BF0e = postEvidential/priorEvidential

# H_1 versus H_e
postEvidential = mean(posteriorDelta>0)
priorEvidential = 1-pcauchy(0,0,1)
BF1e = postEvidential/priorEvidential

# H_1 versus H_0
BF10 = BF1e/BF0e; BF10
\end{lstlisting}

The key steps in Listing \ref{code3} are as follows. First, we will compare $\mathcal{H}_0$ to the encompassing model $\mathcal{H}_e$. To this end, we need to compute the proportion of posterior samples from the encompassing model that are in agreement with the inequality constraint imposed by $\mathcal{H}_0$ (this is the quantity $1/d$ in the proof of Proposition \ref{encompassing}). We say that such samples are ``evidential'' of $\mathcal{H}_0$. The R code that will compute this proportion is in line 7. Then, we need to compute the proportion of evidential samples in the prior (i.e., $1/c$). Since the prior has known density $\delta \sim \text{Cauchy}(0,1)$, we can use the \verb|pcauchy| command to directly compute the proportion of values $\delta$ that are less than 0; this computation proceeds in line 8. Then, by Proposition \ref{encompassing}, we can simply divide these two quantities to compute $B_{0e}$ (see line 9).

Next, we do a similar computation with $\mathcal{H}_1$ versus the encompassing model $\mathcal{H}_e$, shown in lines 12-14. This gives us a value for $B_{1e}$. Now, we can compute the Bayes factor for $\mathcal{H}_1$ over $\mathcal{H}_0$ by computing $B_{1e}/B_{0e} \approx 65$, indicating that the observed data are approximately 65 times more likely under the alternative model $\mathcal{H}_1:\delta>0$ compared to the null model $\mathcal{H}_0:\delta\leq 0$.

This approach can be extended to test a wide variety of hypotheses involving inequality constraints. One particular advantage of the encompassing approach is that it gives us the ability to test interval null hypotheses -- that is, hypotheses of the form $\mathcal{H}_0:-\varepsilon < \delta < \varepsilon$. To illustrate, consider the analyst who is not interested in whether an effect is {\it exactly} 0, but rather, is interested in whether an effect is larger than threshold, say $\varepsilon = 0.2$.

An example of such computation is displayed in Listing \ref{code4}. Like in the example above, we define three hypotheses: two competing hypotheses $\mathcal{H}_0:|\delta|<\varepsilon$ and $\mathcal{H}_1:|\delta|>\varepsilon$, both nested within an encompassing model $\mathcal{H}_e:\delta$. In the example, I have set $\varepsilon=0.20$, but one can set this value at whatever value seems reasonable for the given context.

\begin{lstlisting}[float,caption=Computing a Bayes factor for an interval null hypothesis using the encompassing approach,label=code4]
# interval null via encompassing priors 
# H_0: "null" model: |delta| < epsilon
# H_1: |delta| > epsilon
# H_e: delta ~ Cauchy (unconstrained)

epsilon = 0.2

# H_0 versus H_e
postEvidential = mean(abs(posteriorDelta) < epsilon)
priorEvidential = pcauchy(epsilon,0,1)-pcauchy(-epsilon,0,1)
B_0e = postEvidential/priorEvidential

# H_1 versus H_e
postEvidential = mean(abs(posteriorDelta) > epsilon)
priorEvidential = (1-pcauchy(epsilon,0,1)) + pcauchy(-epsilon,0,1)
B_1e = postEvidential/priorEvidential

# H_1 versus H_0
B_10 = B_1e/B_0e; B_10
\end{lstlisting}

As in the example before, we use the posterior samples for $\delta$ under $\mathcal{H}_e$ that were generated in Listing \ref{code1} to calculate the proportions of the posterior that satisfied the inequality constraints on $\delta$ imposed by $\mathcal{H}_0$ and $\mathcal{H}_1$. These computations are performed in lines 9 and 14. The relevant proportions of the unconstrained prior that obey the imposed inequality constraints are again calculated using the \verb|pcauchy| command, as seen in lines 10 and 15. Finally, lines 11,16, and 19 calculate the relevant Bayes factors. As we can see, the Bayes factor $B_{10} \approx 2.2$, indicating that the observed data are 2.2 times more likely under the model $\mathcal{H}_1:|\delta|>\varepsilon$ compared to the model $\mathcal{H}_0:|\delta|<\varepsilon$. Notice that this is similar to, but less than, the Bayes factor obtained with the point null hypothesis $\mathcal{H}_0:\delta=0$ from earlier in the paper.

\section{Conclusions}
In this tutorial, I have demonstrated a flexible approach to extending the default JZS $t$-test, a Bayesian test that is becoming increasingly popular in the social and behavioral sciences \citep{rouder2009}. The approach uses two theoretical results, the Savage-Dickey density ratio \citep{dickey1970} and the method of encompassing priors \citep{klugkist2005} in combination with an easy-to-use probabilistic modeling package for R called Greta \citep{greta}. Though the examples presented in this paper are quite trivial to implement, they provide the reader with a general workflow that can be extended to solve problems relevant to his or her own work. Inherent in the techniques presented here is flexibility; the user has complete freedom to specify the underlying models and specific model comparisons in any way that he or she wishes. Finally, the Greta modeling language is easy to learn and readily extends to more complex modelsy. Furthermore, by harnessing the power of Google Tensorflow \citep{tensorflow}, the MCMC sampler is fast, with all models described in the paper converging in less than one minute. In summary, I think this is an advantageous approach to using default Bayesian tests for common hypothesis testing scenarios, especially those common in the social, behavioral, and other applied sciences.

\section*{Acknowledgement}

I am grateful to the handling editor and two anonymous referees for their comments on an earlier version of this manuscript.

\bibliographystyle{apalike}
\bibliography{references}

\appendix
\section*{}
In this appendix, I report a simulation study designed to benchmark performance of the posterior sampling generalization of the JZS $t$-test described in this paper against the version of the JZS test originally proposed by \citet{rouder2009}.

For each simulation, I randomly generated 200 single-sample data sets of size $N$ (where $N=20$, 50, or 80) under the model $y_i = \mu + \varepsilon_i$. For each of these data sets, different ``effects'' were represented by varying the parameter $\mu$, which was assumed to be drawn randomly from a normal distribution with mean 0 and variance $g$ \citep[where $g=0$, 0.05, or 0.2; also see][]{wang2017,faulkenberry2018}. Each of the errors $\varepsilon_i$ for a given data set was drawn from a normal distribution with mean 0 and variance 1. The resulting combinations of 3 different sample sizes ($N=20,50,80$) and 3 different effect parameters ($g=0,0.05,0.2$) produced a total of nine simulations.

Once a simulated data set was constructed, I computed the Bayes factor $B_{01}$ for the null hypothesis $\mathcal{H}_0:\mu = 0$ over the alternative hypothesis $\mathcal{H}_1:\mu=1$ using two methods: (1) the JZS Bayes factor of \citet{rouder2009}, and (2) the posterior sampling technique. The JZS Bayes factor was computed using the \verb|ttestBF| function from the BayesFactor package in R, and the posterior sampling Bayes factor was computed using the methods described in this paper in Section 4.1 (i.e., drawing posterior samples using Greta, fitting a logspline estimate of the posterior, and then computing the Savage-Dickey density ratio by comparing the ordinates of $\delta=0$ in the posterior and prior, respectively). Each Bayes factor was computed using a Cauchy prior of scale $r=1$.

In all, I found the two methods to be quite comparable to each other. To see why, let's first inspect the distributions of Bayes factors obtained for each of the nine combinations of $N$ and $g$. Figure \ref{fig:simPlot} shows these via overlaid density plots of $\log(B_{01})$ for each computation method. As one can readily see in Figure \ref{fig:simPlot}, the density plots have considerable overlap, indicating that both methods produced very similar distributions of Bayes factors.

\begin{figure}
  \centering
  \includegraphics[width=0.7\textwidth]{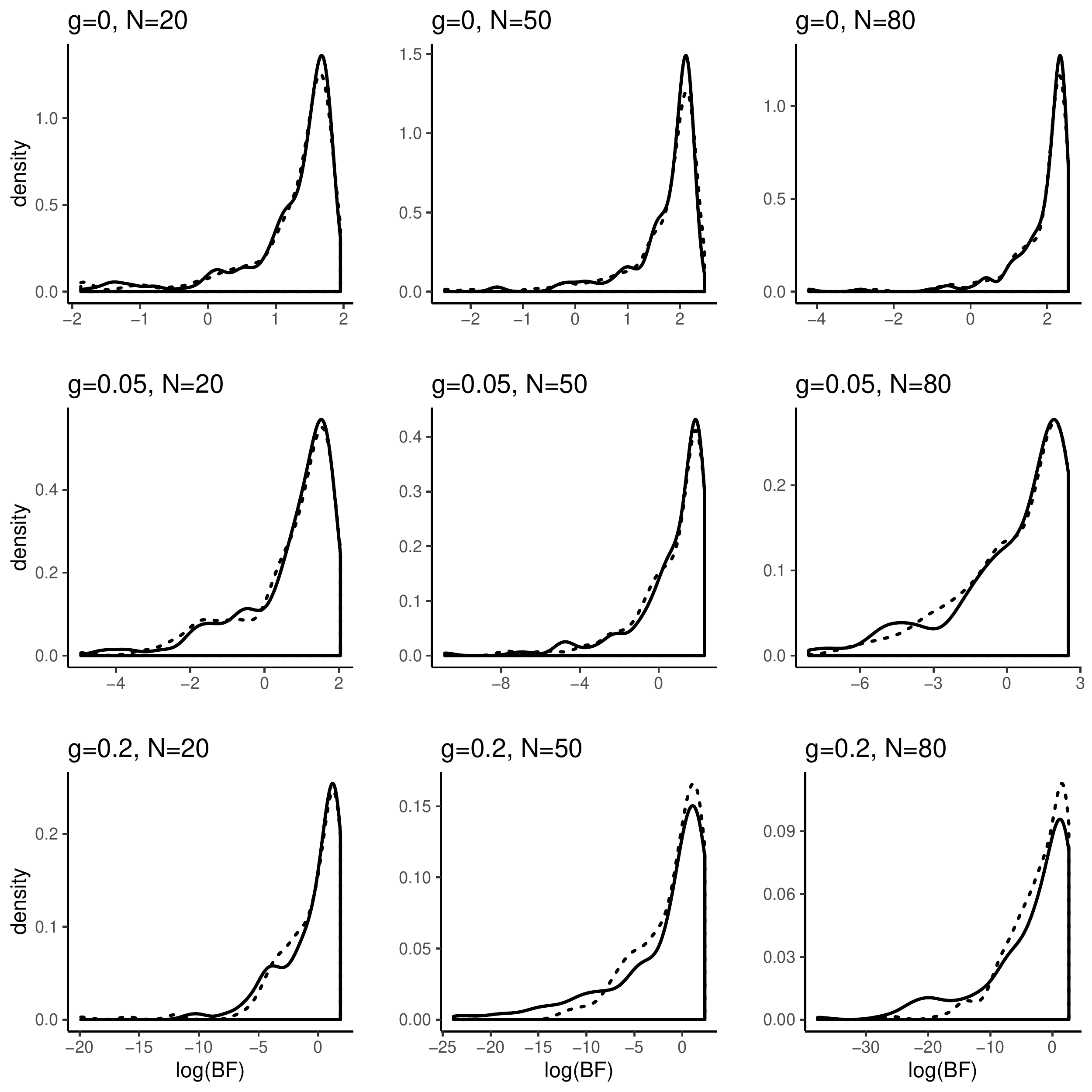}
  \caption{Density plots of $\log(B_{01})$ for each of the nine combinations of "effect" $g$ (0,0.05,0.2) and sample size $N$ (20,50,80). The JZS Bayes factor distribution is displayed as a solid line, whereas the distribution of posterior sampling Bayes factors is displayed as a dashed line.}
  \label{fig:simPlot}
\end{figure}

Further evidence for the compatibility of the two techniques comes from Table \ref{tab:fiveNum}, shows five-number summaries for the values of $\log(B_{01})$ obtained for each condition. Additionally, I computed model selection consistency, defined as the proportion of simulated data sets for which the JZS Bayes factor and the posterior sampling Bayes factor led to the same model choice. For this computation, model selection was determined by computing $\log(B_{01})$. If $\log(B_{01})>0$, then $\mathcal{H}_0$ was selected; otherwise, $\mathcal{H}_1$ was selected \citep[see also][]{faulkenberry2018}. As is shown in Table \ref{tab:fiveNum}, the posterior sampling technique again produced a distribution of Bayes factors that was very similar to those obtained from the JZS Bayes factor, mirroring what is shown in Figure \ref{fig:simPlot}. Critically, both computation methods selected the same model in a large percentage of the simulated data sets, as indicated by the large consistency values in Table \ref{tab:fiveNum}. 

\begin{table}
\footnotesize
\centering
\caption{Five-number summary of $\log(B_{01})$ and model selection consistency.}\label{tab:fiveNum}
\tabcolsep=12pt
\begin{tabular}{ccccccccc}
\hline \hline
  $g$ & $N$ & $BF$ type & Min & $Q_1$ & Median & $Q_3$ & Max & Consistency\\
  0 & 20 & JZS & -1.77 & -1.71 & -1.51 & -1.12 & 1.76 & \\
      &    & sampling & -1.95 & -1.70 & -1.51 &  -1.08 & 1.88 & 0.985\\
      & 50 & JZS & -2.20 & -2.16 & -2.01 & -1.56 & 1.51 &\\
      &    & sampling & -2.47 & -2.16 & -1.99 & -1.53 & 2.49 & 0.985\\
      & 80 & JZS & -2.43 & -2.38 & -2.20 & -1.69 & 4.21 &\\
      &    & sampling & -2.55 & -2.36 & -2.18 & -1.61 & 2.90 & 1.000\\\hline
  0.05 & 20 & JZS & -4.43 & 0.23 & 1.11 & 1.61 & 1.77 & \\
      &     & sampling & -4.95 & 0.27 & 1.12 & 1.62 & 2.04 & 0.995\\
      &  50 & JZS & -10.86 & 0.08 & 1.48 & 1.97 & 2.20 & \\
      &     & sampling & -4.95 & 0.27 & 1.12 & 1.62 & 2.04 & 0.975\\
      &  80 & JZS & -8.09 & -0.79 & 1.14 & 2.06 & 2.43 &\\
      &     & sampling & -6.96 & -0.79 & 1.16 & 2.08 & 2.51 & 0.945\\\hline
  0.2 & 20 & JZS & -10.98 & -1.76 & 0.39 & 1.44 & 1.77 & \\
      &    &  sampling & -19.90 & -1.81 & 0.25 & 1.42 & 1.86 & 0.975\\
      & 50 & JZS & -23.85 & -4.25 & 0.05 & 1.51 & 2.20 & \\
      &    & sampling & -12.39 & -3.19 & 0.01 & 1.49 & 2.36 & 0.965\\
      & 80 & JZS & -37.79 & -6.78 & -1.13 & 1.67 & 2.43 &\\
      &    & sampling & -25.60 & -4.77 & -1.11 & 1.85 & 2.65 & 1.000\\
\hline \hline
\end{tabular}
\end{table}

In all, the proposed sampling method for computing Bayes factors described in this tutorial seems to be quite consistent with the established, albeit less flexible, JZS Bayes factor of \citet{rouder2009}. Thus, the researcher can be confident that the posterior sampling methods described in this paper not only afford a great deal of flexibility, but also benchmark well against other established methods of computation.  

\end{document}